\begin{document}

\setlist{noitemsep}  % Reduce space between list items (itemize, enumerate, etc.)
%\onehalfspacing      % Use 1.5 spacing
% Use endnotes instead of footnotes - redefine \footnote command

\title{Connecting Sharpe ratio and Student t-statistic, and beyond}

\author{Eric Benhamou 
\thanks{A.I. SQUARE CONNECT, 35 Boulevard d'Inkermann 92200 Neuilly sur Seine, France}  
\textsuperscript{,} 
\thanks{LAMSADE, Université Paris Dauphine, Place du Maréchal de Lattre de Tassigny,75016 Paris, France} 
\textsuperscript{,} 
\thanks{E-mail: eric.benhamou@aisquareconnect.com, eric.benhamou@dauphine.eu}
}

\date{}              % No date for final submission

\singlespacing

\maketitle

\vspace{-.2in}
\begin{abstract}
\noindent Sharpe ratio is widely used in asset management to compare and benchmark funds and asset managers. It computes the ratio of the excess return over the strategy standard deviation. However, the elements to compute the Sharpe ratio, namely, the expected returns and the volatilities are unknown numbers and need to be estimated statistically. This means that the Sharpe ratio used by funds is subject to be error prone because of statistical estimation error. \cite{Lo_2002}, \cite{Mertens_2002} derive explicit expressions for the statistical distribution of the Sharpe ratio using standard asymptotic theory under several sets of assumptions (independent normally distributed - and identically distributed returns). In this paper, we provide the exact distribution of the Sharpe ratio for independent normally distributed return. In this case, the Sharpe ratio statistic is up to a rescaling factor a non centered Student distribution whose characteristics have been widely studied by statisticians. The asymptotic behavior of our distribution provides the result of \cite{Lo_2002}. We also illustrate the fact that the empirical Sharpe ratio is asymptotically optimal in the sense that it achieves the Cramer Rao bound. We then study the empirical SR under AR(1) assumptions and investigate the effect of compounding period on the Sharpe (computing the annual Sharpe with monthly data for instance). We finally provide general formula in this case of heteroscedasticity and autocorrelation.
\end{abstract}

\medskip

\noindent \textit{JEL classification}: C12, G11.

\medskip
\noindent \textit{Keywords}: Sharpe ratio, Student distribution, compounding effect on Sharpe, AR(1), Cramer Rao bound

\clearpage

\onehalfspacing
\setcounter{footnote}{0}
\renewcommand{\thefootnote}{\arabic{footnote}}
\setcounter{page}{1}

%%%%%%%%%%%
%%% INTRODUCTION
%%%%%%%%%%%

\section{Introduction}
When facing choices to invest in various funds (whether mutual or hedge funds), it is quite common to compare their Sharpe ratio in order to rank funds. This indicator aims at measuring performance for a given risk. This eponymous ratio established by \cite{Sharpe_1966} is a simple number easy to understand. It computes the ratio of the excess return over the strategy standard deviation. However, the elements to compute the Sharpe ratio, namely, the expected returns and the volatilities are unknown numbers and need to be estimated statistically. This means that the Sharpe ratio used by funds is subject to be error prone because of statistical estimation error. In a seminal paper, \cite{Lo_2002} derive explicit expressions for the statistical distribution of the Sharpe ratio using standard asymptotic theory under several sets of assumptions (independent normally distributed - and identically distributed returns). This is interesting as it provides intuition of potential bias and correction to apply to get an unbiased estimator. However, the results are provided as asymptotic results. It could be interested to derive or extend result to the non asymptotic distribution. This is precisely the contribution of this paper. First, we extend provide the exact distribution of the Sharpe ratio for independent normally distributed return. We show that under these conditions, the Sharpe ratio statistic is a non centered Student distribution whose characteristics have been widely studied by statisticians, up to a rescaling factor. Results of \cite{Lo_2002} are easily derived as the limit of our results when the sample size tends to be large. We also study the asymptotic efficiency of the Sharpe ratio statistics and provide exact distribution under AR(1) normal process conditions. We examine finally the impact of compounding effect for computing the Sharpe ratio. The standard square root rule is questionable as soon as there is autocorrelation or homoscedasticity.

\section{Primer on Student distribution}\label{Student}
\subsection{Historical anecdote: why Student?}
The Student t-distribution has been widely studied in statistics. Originally derived as a posterior distribution in 1876 by \cite{Helmert_1876} and \cite{Luroth_1876} as well as in a more general form as Pearson Type IV distribution in \cite{Pearson_1985}, the Student distribution was really popularized by William Sealy Gosset in \cite{Student_1908}. There is various interpretations why this distribution has been published under the pseudonym 'Student'. 

Gosset, an Oxford graduate, worked at the Guinness Brewery in Dublin, Ireland. He was interested in testing small samples – for instance, he wanted to test the chemical properties of barley where sample sizes might be as few as 3. The main version of the origin of the pseudonym is that Gosset's employer, Guiness, preferred staff to use pen names to keep secret their inventions. So when publishing scientific papers, instead of their real name, scientist used different names to hide their identity. Gosset chose the name "Student". Posterity kept this name for the distribution. Another version is that Guinness was reluctant to make public to their competitors their usage of the t-test to determine the quality of raw materials. 
All in all,  Gosset's most noteworthy achievement is now called Student's, rather than Gosset's, t-distribution. 

%%%%%%%%%%%%%%%%%%%%%
\subsection{Assumptions}
If $X, ..., X_n$ are independent and identically distributed as a normal distribution with mean $\mu$ and variance $\sigma^2$, 
then the empirical average 
\[
\bar{X} = \frac {1} {n} \sum_{i=1}^n X_i
\]
follows also a normal distribution. The empirical (Bessel-corrected) variance 
\[
\hat{\sigma}^2 = \frac 1 {n-1} \sum_{i=1}^n (X_i - \bar X)^2 
\]
follows (up to the renormalizing term $n-1$) a Chi Square distribution with $n-1$ degree of freedom. The t- statistic defined as
\begin{equation}\label{tstatistic}
\frac{ \bar X -  \mu_0} {\hat{\sigma} / \sqrt{n}} = \sqrt{n}  \,\,\, \frac{ \bar X - \hat \mu_0} { \hat{\sigma}  }
\end{equation}

has a Student's t-distribution with $n-1$ degrees of freedom. 
If the variables $(X_i)_{i=1..n}$ have a mean $\mu$ different from $\mu_0$, the distribution is referred to as a non-central t-distribution with non centrality parameter given by 
\begin{equation}
\eta = \sqrt n \quad \frac{\mu - \mu_0}{\sigma}
\end{equation}
It is simply the expectation of the estimator. The centered and non centered Student distribution are very well known.  
Extension to weaker condition for the t-statistics has been widely studied. \cite{Mauldon_1956} raised the question 
for which pdfs the t-statistic as defined by \ref{tstatistic} is t-distributed with $n - 1$ degrees of freedom. This characterization problem can be generalized to the one of finding all the pdfs for which a certain statistic possesses the property which is a characteristic for these pdfs. \cite{Kagan_1973}, \cite{Bondesson_1974} and \cite{Bondesson_1983} to cite a few tackled Mauldon’s
problem. \cite{Bondesson_1983} proved the necessary and sufficient condition for a t-statistic to have Student’s t-distribution with $n - 1$ degrees
of freedom for all sample sizes is the normality of the underlying distribution. It is not necessary that $X_1,...,X_n$  is an independent sample. Indeed consider
$X_1,...,X_n$  as a random vector $X_n = (X_1,...,X_n)^T$ each component of which having the
same marginal distribution function, $F(·)$. \cite{Efron_1969} has pointed out that the weaker condition
of symmetry can replace the normality assumption. Later, \cite{Fang_2001}  showed that if the vector $X_n$ has a spherical distribution, then the
t-statistic has a t-distribution. A possible extension of
Mauldon’s problem is to find all $F(·)$ for which the Student’s t-statistic has the t-distribution
with $n - 1$ degrees of freedom. Another extension is to determine the distribution of the t-statistic under weaker conditions.
\\ \\

\subsection{A few properties}
Its cumulative distribution function can be expressed in closed form (see for instance \cite{Lenth_1989}) as follows:

\begin{equation}
F_{n-1 ,\eta }(x)= 
\begin{cases}{
\tilde{F}}_{n-1 ,\eta }(x),&{\mbox{if }}x\geq 0;\\
1-{\tilde  {F}}_{{n-1 ,-\eta }}(x),&{\mbox{if }}x<0,
\end{cases}
\end{equation}

\noindent where
${\tilde  {F}}_{n-1 ,\eta}(x)=\Phi (-\eta ) + \frac  {1}{2} \sum _{j=0}^{\infty }\left[p_{j}I_{y}\left(j+{\frac  {1}{2}},{\frac  {n-1 }{2}}\right)+q_{j}I_{y}\left(j+1,{\frac  {n-1 }{2}}\right)\right]$, \\
$I_{y}\,\!(a,b)$ is the regularized incomplete beta function,\\
$y=\frac  {x^{2}}{x^{2}+n-1 }$,\\
$p_{j}=\frac{1}{j!} \exp \left\{-{\frac  {\eta ^{2}}{2}}\right\}\left({\frac  {\eta ^{2}}{2}}\right)^{j}$,\\
$q_{j}={\frac{\eta }{{\sqrt  {2}}\Gamma (j+3/2)}}\exp \left\{-{\frac  {\eta ^{2}}{2}}\right\}\left({\frac  {\eta ^{2}}{2}}\right)^{j}$,\\
and $\Phi$ is the cumulative distribution function of the standard normal distribution. \\\\
Its  probability density function can be expressed in several forms. The most common form (as implemented in R) is the following:
\begin{equation}
f(x)={
\begin{cases}{\frac {n-1 }{x}}\left\{F_{{n+1,\eta }}\left(x{\sqrt {1+{\frac {2}{n-1 }}}}\right)-F_{{n-1 ,\eta }}(x)\right\},&{\mbox{if }}x\neq 0;\\{\frac {\Gamma ({\frac {n}{2}})}{{\sqrt {\pi n-1 }}\Gamma ({\frac {n-1 }{2}})}}\exp \left(-{\frac {\eta ^{2}}{2}}\right),&{\mbox{if }}x=0.
\end{cases}}
\end{equation}

\noindent In general, the kth raw moment of the non-central t-distribution is

\begin{equation}
{\mbox{E}}\left[T^{k}\right]={
\begin{cases}\left({\frac  {n-1 }{2}}\right)^{{{\frac  {k}{2}}}}{\frac  {\Gamma \left({\frac  {n-1-k}{2}}\right)}{\Gamma \left({\frac  {n-1}{2}}\right)}}{\mbox{exp}}\left(-{\frac  {\eta ^{2}}{2}}\right){\frac  {d^{k}}{d\eta ^{k}}}{\mbox{exp}}\left({\frac  {\eta ^{2}}{2}}\right),&{\mbox{if }}n-1 >k;\\{\mbox{Does not exist}},&{\mbox{if }}n-1 \leq k.\\\end{cases}}
\end{equation}

%%%%%%%%%%%%%%%%%%%%%
\subsection{Lower moments}\label{Lower_moments}
In particular, for $n > 3$, the mean and variance of the non-central t-distribution are defined and given by

\begin{equation}
\begin{aligned}
\mathbb{E} \left[T\right] &= \eta {\sqrt  {{\frac  {n-1 }{2}}}}{\frac  {\Gamma ( \frac{n-2}{2}) }{\Gamma ( \frac{n-1}{2} )}}  = \eta \,\, k_n \\
Var\left[T\right]	&= \frac  {(n-1) (1+\eta ^{2})}{n-3}-E\left[T\right] ^{2}
\end{aligned}
\end{equation}
\noindent where we have defined the constant $k_n = {\sqrt  {{\frac  {n-1 }{2}}}}{\frac  {\Gamma (
\frac{n-2}{2} )}{\Gamma (  \frac{n-1}{2})}}$. A good approximation for $k_n$ (related to the Wallis ratio $\frac  {1}{\sqrt \pi} \frac{\Gamma (
n+1 / 2)}{\Gamma ( n + 1 )} $) is $1 + \frac{3}{4n} +\frac{25}{32 n^2} + O (n^{-3})$ (see \cite{Digital_Library_Maths}). More on the Wallis ratio and multiple approximations can be for instance found in \cite{MORTICI2010}, \cite{Guo2013}, \cite{Qi_2015}, \cite{Lin_2017}). Another good approximation is $ \frac{1}{ 1 - \frac{ 3}{4n - 5}}$.

Instead of the constant $k_n$, it is quite common in statistical control litterature (see \cite{Duncan_1986} for instance) to use another constant called $c_4(n)$ (since there exists table for it) defined as follows 
\begin{eqnarray}
c_4(n) =  \sqrt  {\frac{2}{n-1}}  {\frac  {\Gamma( \frac{n}{2})}{\Gamma ( \frac{n-1}{2})}}
\end{eqnarray}

Using the traditional property of the Gamma function $\Gamma(n+1) = n \Gamma(n)$, it is immediate to see that $k_n = \frac{n-1}{n-2}  c_4(n)$.

\subsection{Asymptotic distribution}\label{Asymptotic_distribution}
When $n$ tends to infinity, the t-distribution denoted by $t$ tends to a normal distribution denoted by $N( \eta, \sigma) $ whose parameters are the first two moment of the non centered Student distribution. A better approximation is provided by \cite{Walk_2007} which states that
\begin{equation}\label{Walk_result}
\frac{t (1- \frac 1 {4(n-1)}) - \eta }{ \sqrt{1 + \frac{t^2}{2 (n-1) }} } \rightarrow N(0, 1)
\end{equation}

%%%%%%
%%%% Application to the Sharpe ratio %%%%%%
%%%%%%
\section{Application to the Sharpe ratio}
\subsection{Non central distribution}
Let us apply these result to the eponymous Sharpe ratio denoted in the sequel (SR). Recall that it is defined as the ratio of expected excess return ($\bar R - R_f$) over the risk free rate $R_f$ to its standard deviation, $\sigma$:

\begin{equation}
SR = \frac{\bar R - R_f}{\sigma}
\end{equation}

The Sharpe ratio is simply the t-statistic divided by $\sqrt n$. In other terms, $\sqrt n SR $ follows a (centered or not) Student distribution \textbf{under the explicit assumption that the returns are normally distributed according to a normal distribution with mean $\boldsymbol{\mu_R = \bar R}$, variance $\boldsymbol{\sigma^2}$ and are i.i.d.}. Because $\bar R$ and $\sigma$ are unobservable, they are estimated using historical data as the population moments of the returns' distribution.
Hence, given historical returns $(R_1, R_2,..., R_n)$, the standard estimator for the Sharpe ratio (SR) is given by
\begin{equation}
\widehat{SR} = \frac{\hat{ \bar R} - R_f}{\hat \sigma}
\end{equation}

where 
\begin{equation}
\hat{ \bar R } =\frac{\sum_{i=1}^n R_i }{n}
\end{equation}

\begin{equation}
\hat \sigma  =  \sqrt{\frac { \sum_{i=1}^n \left( R_i - \hat{ \bar R} \right)^2 }{ n-1}} \quad  \text{or equivalently} \quad  \hat \sigma  =   \sqrt{\frac { \sum_{i=1}^n  \left(R_i ^2 - \hat{ \bar R}^2 \right) }{ n-1}}
\end{equation}
Precomputing $\hat{ \bar R }$ avoids a double summation in the definition of the volatility estimator. The $n-1$ divisor is for the estimator to be unbiased. 
Hence, the formula for SR in terms of the individual returns $\left( R_i \right)_{i=1..n}$ is given by:
\begin{equation}
\widehat{SR} = \frac{\sqrt{n-1} \sum_{i=1}^n \left( R_i  - R_f \right) }{  n \sqrt{\sum_{i=1}^n \left( R_i - \hat{ \bar R} \right)^2  } }
\end{equation}

An immediate application of the results of section \ref{Student} \, shows that $\sqrt n SR $ follows a non-central t-distribution with degree of freedom $n-1$. 
The non-centrality parameter $\mu$ is given by 
\begin{equation}
\eta  = \sqrt n  \; \frac{\mu_R - R_f}{\sigma} = \sqrt n  \; SR_{\infty}
\end{equation}

where the theoretical Sharpe ratio $SR_{\infty}$ is defined as $SR_{\infty} = \frac{\mu_R - R_f}{\sigma}$. 
Section \ref{Student} \, provides the exact distribution of the SR. This result that is quite basic is surprisingly almost absents in the financial literature although it was alluded in \cite{Miller_1978}. It is not for instance mentioned or noted in \cite{Lo_2002}. This result although simple is powerful yet as it provides various results for the Sharpe ratio. 
\begin{itemize}
\item First, it provides the real distribution (and leads obviously to the asymptotic distribution as the asymptotic distribution of the non-central t-distribution).
\item Second, all results for the t-statistic are directly transposable to SR, meaning we get for free any results about test, moments, cumulative distribution and cumulative density functions.
\item Third, as we are able to compute moments, we can immediately see that SR is biased estimator
\end{itemize}

%%%%%%%%%%%%%%%%%%%%%
\subsection{Moments}\label{Sharpe_moment}
We have in particular that the moments of the Sharpe ratio estimator are immediately provided by
\begin{eqnarray}\label{SR_M1}
\mathbb E \left[SR \right] 	&= &SR_{\infty} \; {\sqrt  {{\frac  {n-1 }{2 }}}}{\frac  {\Gamma ((n-2)/2)}{\Gamma (n-1 /2)}} \\
& = &  SR_{\infty} \; k_n
\end{eqnarray}

and

\begin{eqnarray}\label{SR_M2}
Var\left[SR \right]	&= &\frac{(n-1) (1+SR_{\infty} ^{2})}{n-3}-E\left[SR \right] ^{2}
\end{eqnarray}

Equation \ref{SR_M1} implies in particular that the empirical SR is biased with a bias term given by $k_n$, which is a result already noted in \cite{Miller_1978} and \cite{JobsonKorbie_1981}. As the constant $k_n$ is larger than one and multiplicative, the empirical Sharpe ratio will overestimate $SR_{\infty}$ when positive, and underestimate when negative. For one year of data and monthly data point, the bias $k_12$ is about 1.08, indicating an overestimation of 8 percents. We provide below in table \ref{tab:bias} the computation of bias for various value of $n$. The bias decreases rapidly as $n$ increases and is below 2 percents after roughly 3 years.

\begin{table}[H]
  \centering
  \caption{Sharpe ratio biais}
    \begin{tabular}{l | rrrrrrrr}
    \toprule
    n     & 3     & 6     & 12    & 24    & 36    & 48    & 60    & 120 \\
    \midrule
    \midrule
     bias  &    1.772  &    1.189  &      1.075  &      1.034  &      1.022  &      1.016  &      1.013  &        1.006   \\
    \bottomrule
    \end{tabular}%
  \label{tab:bias}%
\end{table}%

\subsection{A few results}
Using previous results \ref{Sharpe_moment}, we have the following result that extends the result of \cite{Lo_2002}

\begin{proposition}\label{prop:1}
SR under normal i.i.d. returns assumption is asymptotically normal in $n$ 

with standard deviation $\sigma_{IID,1}$ given by 
\begin{equation}\label{SIID_1}
\sigma_{IID,1}  = \sqrt{ 1+ \frac{ SR_{\infty} ^2} {2} }
\end{equation}
Another asymptotic approximation for the standard deviation (that is better for small n) is given by
\begin{equation}
\sigma_{IID,2}  = \sqrt{ \frac{1+ \frac{ SR_{\infty} ^2} {2} } {1- 1 / n}}
\end{equation}
Last but not least, a more precise estimation of the standard deviation  is given by
\begin{equation}
\sigma_{IID,3}  = \frac{ \sqrt{ 1 + \frac{ SR_{\infty} ^2} {2 (1-1/n)} }}{1 - \frac{1}{4 (n-1)}}
\end{equation}

The proposition means from a distribution point of view the following
\begin{equation}\label{eq:asymptotic_distribution}
\sqrt n ( \hat{SR} - SR_{\infty} ) \rightarrow N(0, \sigma_{IID,i})
\end{equation}

for $i=1$ to 3 depending on which version of the asymptotic standard deviation is used, where in the equation \eqref{eq:asymptotic_distribution} the normal distribution is parametrized in terms of the standard deviation.

\end{proposition}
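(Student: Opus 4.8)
The plan is to reduce the statement to two facts already in hand: the exact law of $\sqrt{n}\,\widehat{SR}$ (a non-central Student $t_{n-1}$ with non-centrality $\eta=\sqrt{n}\,SR_{\infty}$) and the classical joint large-sample behaviour of the sample mean and sample variance of a normal sample. The subtlety is that here $\eta=\sqrt{n}\,SR_{\infty}$ drifts to infinity with the sample size, so the fixed-$\eta$ Student limit recalled in Section~\ref{Asymptotic_distribution} cannot be quoted verbatim. For the limit law itself I would therefore not use the $t$-representation at all, working directly with the two estimators that define $\widehat{SR}$; the $t$-representation and its moments (Section~\ref{Lower_moments}) would enter only to pin down the finite-$n$ refinements $\sigma_{IID,2}$ and $\sigma_{IID,3}$.

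For $\sigma_{IID,1}$, the plan is to write $\widehat{SR}=g(\hat{\bar R}-R_f,\hat\sigma^{2})$ with $g(m,v)=m/\sqrt{v}$, so that $SR_{\infty}=g(\mu_R-R_f,\sigma^{2})$. The multivariate CLT applied to the i.i.d.\ vectors $\big(R_i-\mu_R,\ (R_i-\mu_R)^{2}-\sigma^{2}\big)$, together with the standard step allowing $\mu_R$ to be replaced by $\hat{\bar R}$ inside the variance estimator and the negligibility of the Bessel factor $n/(n-1)$, gives
\begin{equation}
\sqrt{n}\begin{pmatrix}\hat{\bar R}-\mu_R\\ \hat\sigma^{2}-\sigma^{2}\end{pmatrix}\ \longrightarrow\ N\!\left(0,\ \begin{pmatrix}\sigma^{2}&0\\ 0&2\sigma^{4}\end{pmatrix}\right),
\end{equation}
the limiting covariance being diagonal because $\mathbb{E}[(R-\mu_R)^{3}]=0$ (equivalently, because $\hat{\bar R}$ and $\hat\sigma^{2}$ are independent for a normal sample) and its $(2,2)$ entry equal to $\mathrm{Var}[(R-\mu_R)^{2}]=2\sigma^{4}$. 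Applying the delta method with $\nabla g(\mu_R-R_f,\sigma^{2})=\big(1/\sigma,\ -(\mu_R-R_f)/(2\sigma^{3})\big)$ then produces asymptotic variance $\tfrac1{\sigma^{2}}\cdot\sigma^{2}+\tfrac{(\mu_R-R_f)^{2}}{4\sigma^{6}}\cdot 2\sigma^{4}=1+SR_{\infty}^{2}/2$, i.e.\ standard deviation $\sigma_{IID,1}$, which is \eqref{eq:asymptotic_distribution} for $i=1$.

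For the two sharper expressions I would use that $\sigma_{IID,1}$ is only the leading term: all three standard deviations tend to $\sqrt{1+SR_{\infty}^{2}/2}$ as $n\to\infty$, so what is asserted for $i=2,3$ is really a closer finite-$n$ surrogate for $\sqrt{n\,\mathrm{Var}[\widehat{SR}]}$. From $\widehat{SR}=T/\sqrt{n}$ and the exact moments of Section~\ref{Lower_moments}, $n\,\mathrm{Var}[\widehat{SR}]=\frac{(n-1)(1+n\,SR_{\infty}^{2})}{n-3}-n\,SR_{\infty}^{2}\,k_n^{2}$, which reduces to $1+SR_{\infty}^{2}/2$ in the limit (using the expansion of $k_n$ recalled above); $\sigma_{IID,2}$ is the approximation obtained by retaining only the dominant finite-$n$ correction $1/(1-1/n)$. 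The version $\sigma_{IID,3}$ is cleanest from Walk's refinement \eqref{Walk_result}: substituting $t=\sqrt{n}\,\widehat{SR}$ and $\eta=\sqrt{n}\,SR_{\infty}$ turns it into
\begin{equation}
\frac{\sqrt{n}\,\big(\widehat{SR}\,(1-\tfrac1{4(n-1)})-SR_{\infty}\big)}{\sqrt{\,1+\tfrac{n}{2(n-1)}\,\widehat{SR}^{2}\,}}\ \longrightarrow\ N(0,1),
\end{equation}
and replacing $\widehat{SR}^{2}$ by its probability limit $SR_{\infty}^{2}$ in the denominator (Slutsky, using the SLLN and $\sigma>0$), then dividing through by the bias factor $1-\tfrac1{4(n-1)}$, yields \eqref{eq:asymptotic_distribution} with $\sigma_{IID,3}$.

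The main difficulty is conceptual rather than computational. First, because the non-centrality parameter of the underlying Student law drifts like $\sqrt{n}$, the fixed-$\eta$ limit of Section~\ref{Asymptotic_distribution} cannot be invoked directly, which is why I would route the limit law through the CLT for $(\hat{\bar R},\hat\sigma^{2})$ and the delta method; normality is used twice here, for the independence of $\hat{\bar R}$ and $\hat\sigma^{2}$ and for $\mathrm{Var}[(R-\mu_R)^{2}]=2\sigma^{4}$. Second, $\sigma_{IID,2}$ and $\sigma_{IID,3}$ are not uniquely determined: since all three collapse to $\sqrt{1+SR_{\infty}^{2}/2}$ as $n\to\infty$, the genuine limit theorem is \eqref{eq:asymptotic_distribution} with $i=1$, and the content for $i=2,3$ is the particular choice of finite-$n$ factors that tracks the exact variance (respectively the Walk-corrected statistic) more closely for small $n$ — a numerical, not a limiting, statement. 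Finally, one should record that $\mathrm{Var}[T]$, hence the whole variance discussion, requires $n>3$, consistently with the hypotheses under which the moment formulas of Section~\ref{Lower_moments} hold.
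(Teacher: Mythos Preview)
Your argument is correct, and for the finite-$n$ refinements it coincides with the paper: the paper also reads $\sigma_{IID,2}$ off a Taylor expansion of the exact variance $n\,\mathrm{Var}[\widehat{SR}]$ coming from the non-central $t$ moments, and obtains $\sigma_{IID,3}$ directly from Walk's approximation \eqref{Walk_result}, exactly as you do.

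The genuine difference is in how the basic asymptotic normality with variance $\sigma_{IID,1}^{2}=1+SR_{\infty}^{2}/2$ is established. The paper simply invokes Section~\ref{Asymptotic_distribution} to say that $\sqrt{n}\,\widehat{SR}$, being non-central $t_{n-1}$, tends to a normal whose parameters are the limits of its first two moments, and then Taylor-expands the exact variance. You instead bypass the $t$-representation entirely, applying the bivariate CLT to $(\hat{\bar R},\hat\sigma^{2})$ and the delta method with $g(m,v)=m/\sqrt{v}$. Your route is the more careful one: as you note, the non-centrality parameter $\eta=\sqrt{n}\,SR_{\infty}$ drifts with $n$, so the fixed-$\eta$ normal limit of Section~\ref{Asymptotic_distribution} does not apply verbatim, a point the paper glosses over. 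What the paper's approach buys is directness --- once one accepts the heuristic, the exact $t$-moments immediately give both the leading term and the finite-$n$ corrections in one computation --- whereas your delta-method derivation separates cleanly the limit theorem (which needs only the CLT and Slutsky) from the finite-$n$ surrogates (which really do require the exact moments). Your closing remarks that $\sigma_{IID,2}$ and $\sigma_{IID,3}$ are finite-$n$ choices rather than distinct limit theorems, and that $n>3$ is needed for the variance formulas, are apt and sharper than what the paper records.
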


\begin{proof}
Immediate using previous results and given in \ref{proof:1}
\end{proof}

Proposition \ref{prop:1} provides tighter bound for the asymptotic distribution than the one provided in \cite{Lo_2002}. In particular, it states that the $1- \alpha$ confidence interval for the empirical Sharpe ratio $\hat {SR}$ is given by
\begin{equation}
\hat {SR} \pm q_{\alpha / 2} \sigma_{IID,i}
\end{equation}
for $i=1$ to 3  where $q_{\alpha/2}$ is the $\alpha / 2$ quantile of the normal distribution in the asymptotic case. In the non asymptotic case, one needs to use the quantile of the non centered distribution. It is enlightening to compare the various estimator of the variance of SR. We provide in table \ref{tab:variance3} the computation of these volatilities for various values of $SR$ and $n$ according to formula $\sigma_{IID,3}$
For reference, we have also provided in appendix computation according to formula $\sigma_{IID,1}$ respectivly $\sigma_{IID,2}$ in table \ref{tab:variance1} and \ref{tab:variance2} as well as the difference with our best estimator of the variance of the Sharpe ratio.

\begin{table}[H]
  \centering
  \caption{Asymptotic variance for SR according to formula $\sigma_{IID,3}$}
    \begin{tabular}{|r|rrrrrrrr|}
    \toprule
    \multicolumn{1}{|l|}{SR} & 12    & 24    & 36    & 48    & 60    & 125   & 250   & 500 \\
    \midrule
    \midrule
    0.5   	&    0.308  &    0.217  &    0.177  &    0.153  &    0.137  &    0.095  &    0.067  &    0.047  \\
    0.75  	&    0.330  &    0.232  &    0.189  &    0.164  &    0.146  &    0.101  &    0.072  &    0.051  \\
    1     	&    0.359  &    0.252  &    0.205  &    0.177  &    0.159  &    0.110  &    0.078  &    0.055  \\
    1.25  	&    0.393  &    0.275  &    0.224  &    0.194  &    0.173  &    0.120  &    0.084  &    0.060  \\
    1.5   	&    0.431  &    0.301  &    0.245  &    0.212  &    0.189  &    0.131  &    0.092  &    0.065  \\
    1.75 	&    0.472  &    0.329  &    0.267  &    0.231  &    0.206  &    0.143  &    0.101  &    0.071  \\
    2     	&    0.515  &    0.359  &    0.291  &    0.252  &    0.225  &    0.155  &    0.110  &    0.078  \\
    2.25 	&    0.560  &    0.390  &    0.316  &    0.273  &    0.244  &    0.169  &    0.119  &    0.084  \\
    2.5   	&    0.606  &    0.421  &    0.342  &    0.296  &    0.264  &    0.182  &    0.129  &    0.091  \\
    2.75  	&    0.654  &    0.454  &    0.369  &    0.318  &    0.284  &    0.196  &    0.139  &    0.098  \\
    3     	&    0.702  &    0.487  &    0.395  &    0.341  &    0.305  &    0.210  &    0.149  &    0.105  \\
    \bottomrule
    \end{tabular}%
  \label{tab:variance3}%
\end{table}%

An interesting feature of SR variance is that it decreases asymptotically as $\frac{ 1 } { n }$ or equivalently as $\frac{ 1 } { n }$. SR variance is quite large as for instance for an empirical Sharpe ratio  of 1 and 12 month of data, the variance represents almost .36 or 36\% of it. If we take a quantile of 97.5\% whose student quantile is 2.201 (for 11 degree of freedom) as opposed to the normal well known quantile of 1.96, this implies that our Sharpe ratio lies between $1 \pm 2.201\times 0.359$ which provides as boundaries that the empirical Sharpe lies between  $[ 0.210, 1.790 ]$. This is very wide range and shows the imprecision of the Sharpe. Even for longer maturities like 5 years, the range is still wide: $1 \pm 2.00 \times 0.159 = [ 0.682, 1.318 ]$.

%%%%%%%%%%%%%%%%%%%%%%%%%
\subsection{Efficiency of the empirical Sharpe ratio}
Using the Frechet Darmois  Cramer Rao inequality, we can prove the following result that shows that the empirical SR is asymptotically efficient in the sense that it achieves the Cramer Rao bound
\begin{proposition}\label{prop:2}
under normal i.i.d. returns assumption, the estimator resulting from the empirical SR and the empirical variance is asymptotically efficient, meaning that it achieves the lower bound in terms of Cramer Rao bound given by
\begin{equation}
CRB = \frac{1}{n} { \left( 
\begin{array}{cc}
1+ SR_{\infty}^2 / 2 & -SR_{\infty} \sigma^2 \\
-SR_{\infty} \sigma^2  & 2 \sigma^4
\end{array}
\right) }
\end{equation}

\end{proposition}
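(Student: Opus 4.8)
The plan is to prove the statement in two parts: first, to verify that the displayed matrix really is the Fr\'echet--Darmois--Cram\'er--Rao bound for the parameter $\theta = (SR_\infty, \sigma^2)$ under the i.i.d.\ normal model, and second, to show that the empirical pair $(\widehat{SR},\hat\sigma^2)$ attains it asymptotically. For the first part I would write the log-likelihood of the i.i.d.\ sample $R_1,\dots,R_n$ with common law $N(\mu_R,\sigma^2)$ in the reparametrized form $\mu_R = R_f + SR_\infty\,\sigma$, and compute the score by the chain rule, being careful that $\mu_R$ depends on \emph{both} components of $\theta$ (through $\sigma$). Setting $Z_i = R_i - \mu_R \sim N(0,\sigma^2)$, the per-observation score has components $U_{SR} = Z/\sigma$ and $U_{\sigma^2} = -\tfrac{1}{2\sigma^2} + \tfrac{Z^2}{2\sigma^4} + \tfrac{SR_\infty Z}{2\sigma^3}$.

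Next I would assemble the single-observation Fisher information $i(\theta)=\mathbb{E}[UU^{\top}]$ using the Gaussian moments $\mathbb{E}[Z]=\mathbb{E}[Z^3]=0$, $\mathbb{E}[Z^2]=\sigma^2$, $\mathbb{E}[Z^4]=3\sigma^4$. This yields $i_{SR,SR}=1$, $i_{SR,\sigma^2}=SR_\infty/(2\sigma^2)$ and $i_{\sigma^2,\sigma^2}=(2+SR_\infty^2)/(4\sigma^4)$, so that $\det i(\theta) = 1/(2\sigma^4)$; inverting $n\,i(\theta)$ then reproduces exactly the matrix $CRB$ in the statement. This short computation --- in particular pinning down the off-diagonal entry and the $(2,2)$ entry under the reparametrization --- is the only place where care is genuinely required.

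It then remains to show that the empirical estimator is asymptotically normal with covariance $\tfrac1n\, i(\theta)^{-1}$. For i.i.d.\ normal returns, $\hat{\bar R}$ and $\hat\sigma^2$ are independent, with $\sqrt n(\hat{\bar R}-\mu_R)\to N(0,\sigma^2)$ and $\sqrt n(\hat\sigma^2-\sigma^2)\to N(0,2\sigma^4)$ (the $1/(n-1)$ versus $1/n$ normalization being asymptotically irrelevant). Since $\widehat{SR} = g(\hat{\bar R},\hat\sigma^2)$ with $g(m,v)=(m-R_f)/\sqrt v$, applying the bivariate delta method to $(\hat{\bar R},\hat\sigma^2)$ with gradient $\nabla g = \big(1/\sigma,\,-SR_\infty/(2\sigma^2)\big)$ evaluated at the true value gives asymptotic variance $1+SR_\infty^2/2$ for $\sqrt n(\widehat{SR}-SR_\infty)$, asymptotic covariance $-SR_\infty\sigma^2$ between $\sqrt n(\widehat{SR}-SR_\infty)$ and $\sqrt n(\hat\sigma^2-\sigma^2)$, and variance $2\sigma^4$ for the latter --- precisely the entries of $i(\theta)^{-1}$. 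Hence $\sqrt n\big((\widehat{SR},\hat\sigma^2)-(SR_\infty,\sigma^2)\big)\to N\big(0,\,i(\theta)^{-1}\big)$, so the asymptotic covariance equals the Cram\'er--Rao bound and the estimator is asymptotically efficient. Equivalently and more quickly, $(\widehat{SR},\hat\sigma^2)$ is asymptotically equivalent to the maximum-likelihood estimator of $(SR_\infty,\sigma^2)$ (by invariance of the MLE under the reparametrization), and the MLE is automatically efficient under the standard regularity conditions, which hold for the Gaussian family.
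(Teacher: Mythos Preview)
Your proof is correct and follows essentially the same two-step structure as the paper: compute the Fisher information for the parametrization $(SR_\infty,\sigma^2)$, invert it to obtain the displayed Cram\'er--Rao matrix, and then verify that the asymptotic covariance of $(\widehat{SR},\hat\sigma^2)$ coincides with it. The only noteworthy difference is in the second step: the paper obtains the asymptotic variance of $\widehat{SR}$ by taking limits of the exact non-central $t$ moment formulas and then \emph{cites} Mertens (2002) / Pav (2016) for the cross term $\mathrm{Cov}(\widehat{SR},\hat\sigma^2)$, whereas you derive all three entries in one stroke via the delta method applied to $(\hat{\bar R},\hat\sigma^2)$ using their independence under normality (and also note the MLE-invariance shortcut). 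Your route is slightly more self-contained, but the underlying argument is the same.
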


\begin{proof}
Given in \ref{proof:2}
\end{proof}

\subsection{Weaker conditions}
\cite{Lo_2002}, \cite{Mertens_2002} and later \cite{Christie_2005} derived the asymptotic distribution of the Sharpe ratio under the more relaxed assumption of stationarity and ergodicity. \cite{Opdyke_2007} interestingly showed that the
derivation provided by \cite{Christie_2005} under the non-IID returns condition was in fact identical to the one provided by \cite{Mertens_2002}. 
\cite{Liu_2012} and \cite{Qi_2018} improved approximation accuracy to order $O(n^{-3/2})$ .

\subsection{Sample SR under AR(1) assumptions}
\subsubsection{Exact distribution for normal AR(1)}

The i.i.d. normal assumption for the return is far from being verified in practice. A more realistic set-up is to assume that the returns follow an AR(1) process defined as follows:
\begin{equation}\label{AR_assumptions}
\left\{ {
\begin{array}{l l l l }
R_t 			&  = & \mu + \epsilon_t 				& \quad t \geq 1 ; \\
\epsilon_t 	& = & \rho \epsilon_{t-1} + \sigma v_t 	& \quad t \geq 2 ; 
\end{array} } \right.
\end{equation}

where $v_t$ is an independent white noise processes (i.i.d. variables with zero mean and unit constant variance). To assume a stationary process, we impose
\begin{equation}
\lvert {\rho} \rvert  \leq 1
\end{equation}

It is easy to check that equation \ref{AR_assumptions} is equivalent to
\begin{equation}
\begin{array}{l l l l }
R_t 			&  = & \mu + \rho ( R_{t-1} - \mu )+ \sigma v_t 	& \quad t \geq 2 ; 
\end{array}
\end{equation}

We can also easily check that the variance and covariance of the returns are given by
\begin{equation}\label{moment2}
\begin{array}{l l l l }
V(R_t) & = & \frac {\sigma^2} {1-\rho^2} 								\; \;\;\;\; \text{for} \;  t \geq 1  \\
Cov(R_t, R_u ) & = & \frac {\sigma^2 \rho^{ \lvert {t -u} \rvert  }} {1-\rho^2} 	\; \;\;\;\; \text{for} \;  t,u \geq 1 
\end{array}
\end{equation}

Both expressions in \ref{moment2} are independent of time $t$ and the covariance only depends on $\lvert {t -u} \rvert$ implying that $R_t$ is a stationary process. If we now look at the empirical SR under these assumptions, it should converge to
\begin{equation}
\frac{ \mathbb{E}[{R_t}] - R_f } { \sqrt {var(R_t)}} = \frac{ \mu - R_f}{\sqrt { \frac{ \sigma^2 }{ 1- \rho^2}}} 
\end{equation}

%Under the more specific assumption that returns follow a normal AR(1) process (meaning that $v_t$ is a normal unit process $N(0,1)$), 
%\cite{Benhamou_Tstat_AR_2018} showed that the sample or empirical SR follows a non centered t-distribution whose non-centrality parameter is given by
%\begin{equation}
%\eta = \sqrt n  \frac{ \mu - R_f}{\sqrt { \frac{ \sigma^2 }{ 1- \rho^2}}} 
%\end{equation}
%This is an interesting result as it provides the exact distribution of the empirical SR. This extends the work of \cite{Qi_2018} or \cite{Liu_2012}.

\subsubsection{Impact of sub-sampling}
Another interesting feature that was first mentioned in \cite{Lo_2002}, is the impact of computing annual Sharpe using monthly data. This can be formalized as follows:
Let us define the $q$ period return as
\begin{equation}
R_t(q) \equiv R_t + R_{t-1} + \ldots+ R_{t-q+1} 
\end{equation}
where in our definition, we have ignored the effects of compounding for computational efficiency\footnote{Of course, the exact expression for compounding returns is
$ R_t(q) \equiv \prod_{i=0}^{q-1}( 1 + R_{t-j} )-1$. But the sake of clarity, we can ignore the compounding effect in our section as this will be second order effect. Equally, we could use log or continuously compounded returns defined as $\log(P_t/P_{t-1})$ in which case, our definition would be exact }. We are interested in measuring effect of auto correlation and heteroscedasticity of returns on the Sharpe ratio and the impact of using for instance monthly return for computing annual Sharpe. 
Let us denote by $SR(q)$ the SR computed with $q$ period returns. Its limit is defined as follows:
\begin{eqnarray}
SR(q) & = & \frac{ \mathbb{E}[ R_t(q)  ] - R_{f}}{ \sqrt{ Var[R_t(q)] }   }
\end{eqnarray}

Let us denote the returns mean by $\mu$, the auto correlation by $\rho_{u,v} = Corr(R_{u}, R_{v})$  and the returns variance by $\sigma_{\infty}^2 = \lim_{t \rightarrow \infty} Var[R_t]$. It is interesting to see the linkage between the $q$ period  SR denoted by $SR(q)$ and the regular SR denoted by $ SR  =  \frac{ \mu - R_{f}}{ \sigma_{\infty} } $. This is the subject of the following proposition

\begin{proposition}\label{prop:3}
The ratio between the $q$ period returns $SR(q)$ and the regular SR is the following:
\begin{equation}\label{SRq_eq1}
\frac{SR(q)}{SR} = \frac{q \sigma_{\infty} }{ \sqrt{ \sum_{i=0}^{q-1} \sigma^2_{t-i} + 2  \sum_{k=1}^{q-1} \sum_{i = 0 }^{q-1-k } \rho_{t-i, t-i-k} \sigma_{t-i} \sigma_{t-i-k} } }
\end{equation}
If the return process is stationary with a constant variance  $\sigma^2 =  Var[R_t] =\sigma_{\infty}^2 $ and stationary correlation denoted by $\rho_{v-u} = Corr( R_{u},R_{v})$, this relationship simplifies to
\begin{equation}\label{SRq_eq2}
\frac{SR(q)}{SR}= \sqrt{ \frac{ q} { 1 + 2  \sum_{k=1}^{q-1} (q-k) \rho_{k} } }
\end{equation}
If in addition, the returns follow an AR(1) process $\rho_k=\rho^k$, equation \ref{SRq_eq2} becomes
\begin{equation}\label{SRq_eq3}
\frac{SR(q)}{SR} = \sqrt{ \frac{ q} { 1 + \frac{ 2 \rho }{ 1-\rho} \left( 1 - \frac{ 1 - \rho^q}{q (1-\rho)} \right) } }
\end{equation}

If the returns are non correlated ($\rho_k=0$), equation \ref{SRq_eq3} becomes
\begin{equation}\label{SRq_eq4}
\frac{SR(q)}{SR} = \sqrt{ q}
\end{equation}
The last equation is the so called square root rule that states that the annual Sharpe is equal to $\sqrt{12}$ the monthly Sharpe.
\end{proposition}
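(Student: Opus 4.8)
The statement is a chain of increasingly special cases, so the plan is to prove \eqref{SRq_eq1} first and then read off \eqref{SRq_eq2}, \eqref{SRq_eq3} and \eqref{SRq_eq4} by successive specialization. The only substantive input is the first two moments of the $q$-period return $R_t(q)=\sum_{i=0}^{q-1}R_{t-i}$. For the numerator, linearity of expectation and stationarity of the mean give $\mathbb{E}[R_t(q)]=q\mu$, so, reading $R_f$ in the definition of $SR(q)$ as the $q$-period risk-free rate $qR_f$ (the natural convention once compounding is dropped), the excess return is $q(\mu-R_f)$. Dividing by $SR=(\mu-R_f)/\sigma_\infty$ then reduces \eqref{SRq_eq1} to the single identity $SR(q)/SR=q\,\sigma_\infty/\sqrt{\mathrm{Var}[R_t(q)]}$, and the whole proof hinges on the variance of a sum of correlated variables.

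For that variance I would expand $\mathrm{Var}[R_t(q)]=\sum_{i=0}^{q-1}\mathrm{Var}(R_{t-i})+2\sum_{0\le i<j\le q-1}\mathrm{Cov}(R_{t-i},R_{t-j})$ by bilinearity, then re-index the off-diagonal sum by the lag $k=j-i\in\{1,\dots,q-1\}$, for which the free index $i$ runs over $\{0,\dots,q-1-k\}$. Substituting $\mathrm{Var}(R_{t-i})=\sigma_{t-i}^2$ and $\mathrm{Cov}(R_{t-i},R_{t-i-k})=\rho_{t-i,\,t-i-k}\,\sigma_{t-i}\,\sigma_{t-i-k}$ turns the denominator into exactly the expression in \eqref{SRq_eq1}. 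Imposing stationarity, i.e.\ $\sigma_{t-i}^2\equiv\sigma^2=\sigma_\infty^2$ and $\rho_{t-i,\,t-i-k}\equiv\rho_k$, makes each inner sum over $i$ contribute $q-k$ equal terms, so $\mathrm{Var}[R_t(q)]=\sigma^2\bigl(q+2\sum_{k=1}^{q-1}(q-k)\rho_k\bigr)$; the $\sigma^2$ cancels $\sigma_\infty^2$ and one is left with $SR(q)/SR=q/\sqrt{q+2\sum_{k=1}^{q-1}(q-k)\rho_k}$, which is \eqref{SRq_eq2}.

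For the AR(1) case the correlations are already known: \eqref{moment2} gives $\mathrm{Corr}(R_u,R_v)=\rho^{|u-v|}$, hence $\rho_k=\rho^k$. The only real computation left is a closed form for $S:=\sum_{k=1}^{q-1}(q-k)\rho^k$; I would write $S=q\sum_{k=1}^{q-1}\rho^k-\sum_{k=1}^{q-1}k\rho^k$ and evaluate the two finite sums from the geometric series $\sum_{k=0}^{q-1}\rho^k=\frac{1-\rho^q}{1-\rho}$ and its $\rho$-derivative, which after simplification gives $S=\frac{\rho}{1-\rho}\bigl(q-\frac{1-\rho^q}{1-\rho}\bigr)$. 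Plugging this into the denominator of \eqref{SRq_eq2}, factoring $q$ out of the radicand and simplifying yields \eqref{SRq_eq3}; and setting $\rho=0$ kills the bracket and collapses \eqref{SRq_eq3} to the square-root rule \eqref{SRq_eq4}.

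Conceptually there is no obstacle — every step is linearity of expectation, bilinearity of covariance and a counting argument. The one place where care is genuinely needed is the algebra of the AR(1) step: getting $S$ right (the sign bookkeeping in differentiating $\frac{1-\rho^q}{1-\rho}$) and then correctly pulling the factor $q$ out of $q+2S$ to land on the precise form \eqref{SRq_eq3}. It is also worth stating explicitly in the proof which convention for the $q$-period risk-free rate is being used in the numerator, since the clean cancellation that produces \eqref{SRq_eq1} depends on it.
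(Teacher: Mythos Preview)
Your proposal is correct and follows essentially the same route as the paper: both compute the numerator as $q(\mu - R_f)$, expand $\mathrm{Var}[R_t(q)]$ by bilinearity, re-index the off-diagonal covariances by the lag $k=j-i$, specialize to stationarity to obtain the $(q-k)\rho_k$ sum, and then close the AR(1) case via the finite geometric series and its derivative. Your explicit remark about the $q$-period risk-free convention is a helpful clarification that the paper leaves implicit.
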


\begin{proof}
Given in \ref{proof:3}
\end{proof}

A numerical result is provided in table \ref{tab:compoudingSharpe}. Proposition \ref{prop:3} is important as it shows that the compounding effect (computing annual Sharpe with monthly return) can have some large impact when using the square root rule to convert the monthly Sharpe to the annual Sharpe in presence of autocorrelation and heteroscedasticity.

%%%%%%%%%%%
%%% CONCLUSION
%%%%%%%%%%%

\section{Conclusion}
Even if Sharpe ratio is the norm of funds financial analysis, we have shown that its empirical estimate has various bias that makes its usage for ranking questionable. Generally, there is well reported literature that Sharpe ratios are skewed
to the left, fat tailed and very sensitive to small samples (see \cite{Goetzmann_2012}). Not surprisingly, the bias are highly influenced by the statistical properties of the returns time series (and in particular auto-correlation and heteroscedasticy). Our work extends previous results in terms of the real distribution of the empirical Sharpe ratio and give as a by-product standards results obtained in the Sharpe ratio financial literature. This work advocates for more substantial analysis when ever comparing funds and in particular a good understanding of investment style to identify potential skew and autocorrelation in fund presented performance monthly returns. This also encourages to use various other performance ratios to analyze deeply funds performance.

\clearpage

%%%%%%%%%%%
%%% APPENDIX
%%%%%%%%%%%
\appendix
\section{Various Proofs}

%%%%%%%%%%%%%%%%%%%
\subsubsection{Proof of Proposition \ref{prop:1}}\label{proof:1}
Section \ref{Asymptotic_distribution} states that $\sqrt n \; SR$ tends asymptotically to a normal distribution whose first two moments are given by the asymptotic limit of \ref{SR_M1} and \ref{SR_M2}.
\ref{SR_M2} provided the exact formula for the standard deviation of $\sqrt n SR$. Denoting by $\sigma_{IID}^2$ the variance of the random variable $\sqrt n SR$, and using the expression for the variance of $SR$ thanks to equation \eqref{SR_M2}, we get using a Taylor expansion in power of $\frac 1 n $:
\begin{eqnarray} 
\frac{\sigma_{IID}^2}{n} & =  &\frac{(n-1) (1+SR_{\infty} ^{2})}{n-3} -  (SR_{\infty} \; k_n)^2 \\
& = & \frac 1 n + SR_{\infty} ^{2} (({1 - \frac 1 n })({1  + \frac 3 n }) - (1+ \frac {6} {4n}) ) + O( \frac{1}{n^2}) \\
& = & \frac 1 n + \frac{ SR_{\infty} ^{2} } { 2 n } + O( \frac{1}{n^2})
\end{eqnarray} 

This is the result obtained by \cite{Lo_2002}. It is obviously equivalent asymptotically to $\sigma_{IID,2}$. 
If we use the result provided by \ref{Walk_result}, we have another approximation for the standard deviation given by
\begin{eqnarray} 
\sigma_{IID,3}  = \frac{ \sqrt{ \frac{1}{n} + \frac{ SR_{\infty} ^2} {2 (n-1)} }}{1 - \frac{1}{4 (n-1)}}
\end{eqnarray} 
 
%%%%%%%%%%%%%%%%%%%
\subsubsection{Proof of Proposition \ref{prop:2}}\label{proof:2}
The log-likelihood of i.i.d. returns with normal distribution with unknown Sharpe ratio $s$ and variance $v$ is given by
\begin{equation}
\mathcal{L}(s,v) = -\frac n 2  \log{ (2 \pi v )} - \sum_{i=1}^{n}\frac{ (R_i - R_f -  s v^{1/2} ) ^2 }{2 v }
\end{equation}

The Fisher information for the estimator resulting from the empirical SR and the empirical variance is computed as the expected opposite of the second order derivative of the log likelihood with respect to the parameters denoted by $\theta = ( s, v ) ^T $
\begin{equation}
\mathcal{I}(s,v) = -\mathbb{E}\left( \frac{ \partial^2 \mathcal{L}(s,v) }{ \partial^2 \theta } |\theta \right)
\end{equation}
This log-likelihood is computed for the parameters that maximizes the log likelihood and given by 
\begin{eqnarray}\label{mle}
\bar s & = &  \frac { \sum_{i=1}^n (R_i - R_f ) }{ n \; v^{1/2} } \\
\bar v & = &  \frac {\sum_{i=1}^n (R_i - R_f -  \bar{s} v^{1/2} )^2  }{n } 
\end{eqnarray}

Straight computation using \ref{mle} leads to 
\begin{eqnarray}
\frac{ \partial^2 \mathcal{L}(s,v) }{ \partial^2 s} 			& = &  -n \\
\frac{ \partial^2 \mathcal{L}(s,v) }{ \partial v \partial s } 		& = &  -\frac{s}{2 v }\\
\frac{ \partial^2 \mathcal{L}(s,v) }{ \partial^2 v} 			& = &  -\frac{ 2+ s^2 }{4 v^2}
\end{eqnarray}

This implies that the Fisher information is given by
\begin{equation}
\mathcal{I}(s,v) = n  \left( 
\begin{array}{cc}
1 				& \frac{s}{2 v } \\
\frac{s}{2 v }		& \frac{ 2+ s^2 }{4 v^2}
\end{array}
\right) 
\end{equation}

Inverting is trivial and leads to the Cramer Rao bound as follows:
\begin{equation}
CRB = \mathcal{I}^{-1} (s,v)  = \frac{1}{n} { \left( 
\begin{array}{cc}
1+ s^2 / 2 & -s \sigma^2 \\
-s \sigma^2  & 2 \sigma^4
\end{array}
\right) }
\end{equation}

Now if we consider the estimator given by the empirical Sharpe ratio and the variance $\left[ \hat{SR}, \hat{v} \right]^T$. It is an unbiased estimator of  $\left[ s, v \right]^T$. 
Equations \ref{SR_M2} and \ref{SR_M1} state that the variance of $ \hat{SR}$ is given by 
\begin{eqnarray}
Var\left[SR \right]	&= &\frac{(n-1) (1+SR_{\infty} ^{2})}{n-3}- ( SR_{\infty} \; k_n) ^{2}
\end{eqnarray}

whose asymptotic limit is (see \ref{SIID_1}) $\frac{1+ s^2/2}{n}$.  The variance of the empirical variance is well known and given by $\frac{ 2 \sigma^4}{n-1}$ that is asymptotically equivalent to $\frac{ 2 \sigma^4}{n}$. The covariance term between the empirical SR and the variance is more involved and can be found in \cite{Mertens_2002} or \cite{Pav_2016} and is asymptomatically equivalent to  by $\frac{-s \sigma^2 }{n}$ which concludes the proof as the  estimator given by the empirical Sharpe ratio and the variance achieves the Cramer-Rao lower
bound asymptotically. 
\qed

%%%%%%%%%%%%%%%%%%%
\subsubsection{Proof of Proposition \ref{prop:3}}\label{proof:3}
The numerator of the empirical SR converges to $q (\mu -  R_{f})$ where $\mu$ denotes the returns mean. The denominator can be expanded to measure the impact of (auto)correlation between returns as follows:

\begin{eqnarray}
Var[R_t(q)]  &= & \sum_{i=0}^{q-1} \sum_{j=0}^{q-1} Cov( R_{t-i}, R_{t-j} ) \\
&= & \sum_{i=0}^{q-1} Var(R_{t-i} ) + 2  \sum_{i=0}^{q-1} \sum_{j=i+1}^{q-1} Cov( R_{t-i}, R_{t-j} )
\end{eqnarray}

Using the change of variable, $k=j-i$, denoting by $\rho_{u,v} = Corr(R_{u}, R_{v})$ and by $\sigma^2_{u}=Var(Var(R_{u})$, we get the following expression:
\begin{equation} \label{GeneralEqVar}
Var[R_t(q)]  =  \sum_{i=0}^{q-1} \sigma^2_{t-i} + 2  \sum_{k=1}^{q-1} \sum_{i = 0 }^{q-1-k } \rho_{t-i, t-i-k} \sigma_{t-i} \sigma_{t-i-k}
\end{equation}

Computed the fraction $\frac{SR(q)}{SR}$ leads to the result of equation \ref{SRq_eq1}. This is the most general formula that extends the one of  \cite{Lo_2002}. 

If the return process is stationary with a constant variance, then
\begin{itemize}
\item $\sigma_{u}$ does not depend on $u$ and can be denoted by $\sigma$, 
\item $\rho_{u,v}$ only depends on the absolute difference between $u$ and $v$ and is written for $v \geq u$ as $\rho_{v-u}$
\end{itemize}

Then equation \ref{GeneralEqVar} becomes
\begin{eqnarray}\label{GeneralEqVar2}
Var[R_t(q)]  &= &  \sigma^2  \left( q + 2  \sum_{k=1}^{q-1} (q-k) \rho_{k} \right)
\end{eqnarray}

Again, computing the fraction $\frac{SR(q)}{SR}$ leads to the result of equation \ref{SRq_eq2}. 

For AR(1) process, we have $\rho_k=\rho^k$ and  equation \ref{GeneralEqVar2} becomes
\begin{eqnarray}
Var[R_t(q)]  &= &  \sigma^2  \left[ q + \frac{ 2 \rho }{ 1-\rho} \left( 1 - \frac{ 1 - \rho^q}{q (1-\rho)} \right) \right]
\end{eqnarray}

where we have used $\sum_{k=0}^{q-2}  \rho^k = \frac{1- \rho^{q-1} }{1-\rho}$ 
and $\sum_{k=0}^{q-1} k \rho^{k-1} = \frac{ \frac{1- \rho^{q}}{1-\rho} - q \rho^{q-1}}{ (1-\rho) }$. 
Equation \ref{SRq_eq4} is immediate.
\qed

\section{Numerical applications}
\subsection{Variance computation}
We provide here values of the variance of the Sharpe ratio according to formula $\sigma_{IID,1}$ respectivly $\sigma_{IID,2}$ in table \ref{tab:variance2} and \ref{tab:variance3} as well as the difference with our best estimator of the variance of the Sharpe ratio.

\begin{table}[H]
  \centering
  \caption{Asymptotic variance for SR according to formula $\sigma_{IID,1}$}
    \begin{tabular}{|r|rrrrrrrr|}
    \toprule
    \multicolumn{1}{|l|}{SR} & 12    & 24    & 36    & 48    & 60    & 125   & 250   & 500 \\
    \midrule
    \midrule
    0.5   	&    0.306  &    0.217  &    0.177  &    0.153  &    0.137  &    0.095  &    0.067  &    0.047  \\
    0.75 	&    0.327  &    0.231  &    0.189  &    0.163  &    0.146  &    0.101  &    0.072  &    0.051  \\
    1     	&    0.354  &    0.250  &    0.204  &    0.177  &    0.158  &    0.110  &    0.077  &    0.055  \\
    1.25  	&    0.385  &    0.272  &    0.222  &    0.193  &    0.172  &    0.119  &    0.084  &    0.060  \\
    1.5   	&    0.421  &    0.298  &    0.243  &    0.210  &    0.188  &    0.130  &    0.092  &    0.065  \\
    1.75  	&    0.459  &    0.325  &    0.265  &    0.230  &    0.205  &    0.142  &    0.101  &    0.071  \\
    2     	&    0.500  &    0.354  &    0.289  &    0.250  &    0.224  &    0.155  &    0.110  &    0.077  \\
    2.25  	&    0.542  &    0.384  &    0.313  &    0.271  &    0.243  &    0.168  &    0.119  &    0.084  \\
    2.5   	&    0.586  &    0.415  &    0.339  &    0.293  &    0.262  &    0.182  &    0.128  &    0.091  \\
    2.75 	&    0.631  &    0.446  &    0.364  &    0.316  &    0.282  &    0.196  &    0.138  &    0.098  \\
    3     	&    0.677  &    0.479  &    0.391  &    0.339  &    0.303  &    0.210  &    0.148  &    0.105  \\
    \bottomrule
    \end{tabular}%
  \label{tab:variance1}%
\end{table}%

\begin{table}[H]
  \centering
  \caption{Asymptotic variance for SR according to formula $\sigma_{IID,2}$}
    \begin{tabular}{|r|rrrrrrrr|}
    \toprule
    \multicolumn{1}{|l|}{SR} & 12    & 24    & 36    & 48    & 60    & 125   & 250   & 500 \\
    \midrule
    \midrule
    0.5   	&    0.320  &    0.221  &    0.179  &    0.155  &    0.138  &    0.095  &    0.067  &    0.047  \\
    0.75  	&    0.341  &    0.236  &    0.191  &    0.165  &    0.147  &    0.102  &    0.072  &    0.051  \\
    1     	&    0.369  &    0.255  &    0.207  &    0.179  &    0.159  &    0.110  &    0.078  &    0.055  \\
    1.25 	&    0.402  &    0.278  &    0.226  &    0.195  &    0.174  &    0.120  &    0.085  &    0.060  \\
    1.5   	&    0.440  &    0.304  &    0.246  &    0.213  &    0.190  &    0.131  &    0.092  &    0.065  \\
    1.75  	&    0.480  &    0.332  &    0.269  &    0.232  &    0.207  &    0.143  &    0.101  &    0.071  \\
    2     	&    0.522  &    0.361  &    0.293  &    0.253  &    0.225  &    0.156  &    0.110  &    0.078  \\
    2.25 	&    0.567  &    0.392  &    0.318  &    0.274  &    0.245  &    0.169  &    0.119  &    0.084  \\
    2.5   	&    0.612  &    0.423  &    0.343  &    0.296  &    0.264  &    0.182  &    0.129  &    0.091  \\
    2.75 	&    0.659  &    0.456  &    0.370  &    0.319  &    0.285  &    0.196  &    0.139  &    0.098  \\
    3     	&    0.707  &    0.489  &    0.396  &    0.342  &    0.305  &    0.211  &    0.149  &    0.105  \\
    \bottomrule
    \end{tabular}%
  \label{tab:variance2}%
\end{table}%

\begin{table}[H]
  \centering
  \caption{Difference between asymptotic variance for SR according to formula $\sigma_{IID,1}$ and $\sigma_{IID,3}$}
    \begin{tabular}{|r|rrrrrrrr|}
    \toprule
    \multicolumn{1}{|l|}{SR} & 12    & 24    & 36    & 48    & 60    & 125   & 250   & 500 \\
    \midrule
    \midrule
    0.5   	& 1.21\% & 0.41\% & 0.22\% & 0.14\% & 0.10\% & 0.03\% & 0.01\% & 0.00\% \\
    0.75 	& 1.13\% & 0.39\% & 0.21\% & 0.13\% & 0.10\% & 0.03\% & 0.01\% & 0.00\% \\
    1     	& 1.04\% & 0.36\% & 0.19\% & 0.12\% & 0.09\% & 0.03\% & 0.01\% & 0.00\% \\
    1.25  	& 0.95\% & 0.33\% & 0.18\% & 0.11\% & 0.08\% & 0.03\% & 0.01\% & 0.00\% \\
    1.5   	& 0.87\% & 0.30\% & 0.16\% & 0.10\% & 0.07\% & 0.02\% & 0.01\% & 0.00\% \\
    1.75 	& 0.80\% & 0.27\% & 0.15\% & 0.10\% & 0.07\% & 0.02\% & 0.01\% & 0.00\% \\
    2     	& 0.73\% & 0.25\% & 0.14\% & 0.09\% & 0.06\% & 0.02\% & 0.01\% & 0.00\% \\
    2.25 	& 0.67\% & 0.23\% & 0.13\% & 0.08\% & 0.06\% & 0.02\% & 0.01\% & 0.00\% \\
    2.5   	& 0.62\% & 0.21\% & 0.12\% & 0.07\% & 0.05\% & 0.02\% & 0.01\% & 0.00\% \\
    2.75 	& 0.58\% & 0.20\% & 0.11\% & 0.07\% & 0.05\% & 0.02\% & 0.01\% & 0.00\% \\
    3     	& 0.54\% & 0.19\% & 0.10\% & 0.06\% & 0.05\% & 0.02\% & 0.01\% & 0.00\% \\
    \bottomrule
    \end{tabular}%
  \label{tab:diff_variance1}%
\end{table}%

\begin{table}[H]
  \centering
  \caption{Difference between asymptotic variance for SR according to formula $\sigma_{IID,2}$ and $\sigma_{IID,3}$}
    \begin{tabular}{|r|rrrrrrrr|}
    \toprule
    \multicolumn{1}{|l|}{SR} & 12    & 24    & 36    & 48    & 60    & 125   & 250   & 500 \\
    \midrule
    \midrule
    0.5   	& 1.21\% & 0.41\% & 0.22\% & 0.14\% & 0.10\% & 0.03\% & 0.01\% & 0.00\% \\
    0.75  	& 1.13\% & 0.39\% & 0.21\% & 0.13\% & 0.10\% & 0.03\% & 0.01\% & 0.00\% \\
    1     	& 1.04\% & 0.36\% & 0.19\% & 0.12\% & 0.09\% & 0.03\% & 0.01\% & 0.00\% \\
    1.25  	& 0.95\% & 0.33\% & 0.18\% & 0.11\% & 0.08\% & 0.03\% & 0.01\% & 0.00\% \\
    1.5   	& 0.87\% & 0.30\% & 0.16\% & 0.10\% & 0.07\% & 0.02\% & 0.01\% & 0.00\% \\
    1.75  	& 0.80\% & 0.27\% & 0.15\% & 0.10\% & 0.07\% & 0.02\% & 0.01\% & 0.00\% \\
    2     	& 0.73\% & 0.25\% & 0.14\% & 0.09\% & 0.06\% & 0.02\% & 0.01\% & 0.00\% \\
    2.25  	& 0.67\% & 0.23\% & 0.13\% & 0.08\% & 0.06\% & 0.02\% & 0.01\% & 0.00\% \\
    2.5   	& 0.62\% & 0.21\% & 0.12\% & 0.07\% & 0.05\% & 0.02\% & 0.01\% & 0.00\% \\
    2.75  	& 0.58\% & 0.20\% & 0.11\% & 0.07\% & 0.05\% & 0.02\% & 0.01\% & 0.00\% \\
    3     	& 0.54\% & 0.19\% & 0.10\% & 0.06\% & 0.05\% & 0.02\% & 0.01\% & 0.00\% \\
    \bottomrule
    \end{tabular}%
  \label{tab:diff_variance2}%
\end{table}%

\subsection{Compounding effect}

\begin{table}[H]
  \centering
  \caption{Compounding effect for AR(1) process for the Sharpe}
    \begin{tabular}{|r|rrrrrrrrrr|}
    \toprule
    $\rho$ \textbackslash  \, $q $   & 2     & 3     & 4     & 6     & 12    & 24    & 36    & 48    & 125   & 250 \\
    \midrule
    \midrule
    90\%  &         1.026  &         1.046  &         1.065  &         1.102  &         1.207  &         1.408  &         1.597	&         1.773  &         2.668  &         3.698  \\
    80\%  &         1.054  &         1.097  &         1.137  &         1.213  &         1.427  &         1.808  &         2.136	&         2.424  &         3.795  &         5.318  \\
    70\%  &         1.085  &         1.152  &         1.215  &         1.333  &         1.654  &         2.187  &         2.622	&         2.997  &         4.749  &         6.679  \\
    60\%  &         1.118  &         1.213  &         1.300  &         1.462  &         1.885  &         2.551  &         3.081	&         3.534  &         5.633  &         7.936  \\
    50\%  &         1.155  &         1.279  &         1.393  &         1.600  &         2.121  &         2.910  &         3.530	&         4.057  &         6.490  &         9.153  \\
    40\%  &         1.195  &         1.353  &         1.494  &         1.748  &         2.364  &         3.273  &         3.981 	&         4.581  &         7.347  &      10.371  \\
    30\%  &         1.240  &         1.433  &         1.605  &         1.905  &         2.615  &         3.645  &         4.444  	&         5.119  &         8.226  &      11.618  \\
    20\%  &         1.291  &         1.523  &         1.725  &         2.073  &         2.879  &         4.035  &         4.928  	&         5.682  &         9.144  &      12.921  \\
    10\%  &         1.348  &         1.622  &         1.857  &         2.254  &         3.160  &         4.450  &         5.442  	&         6.280  &      10.121  &      14.308  \\
    0\%   &         1.414  &         1.732  &         2.000  &         2.449  &         3.464  &         4.899   &         6.000  	&         6.928  &      11.180  &      15.811  \\
    -10\% &         1.491  &         1.853  &         2.157  &         2.664  &         3.798  &         5.393  &         6.615  	&         7.643  &      12.350  &      17.473  \\
    -20\% &         1.581  &         1.987  &         2.331  &         2.901  &         4.171  &         5.949  &         7.306  	&         8.449  &      13.670  &      19.349  \\
    -30\% &         1.690  &         2.132  &         2.527  &         3.169  &         4.596  &         6.586  &         8.103  	&         9.377  &      15.196  &      21.519  \\
    -40\% &         1.826  &         2.287  &         2.752  &         3.477  &         5.093  &         7.339  &         9.046  	&      10.480  &      17.014  &      24.106  \\
    -50\% &         2.000  &         2.449  &         3.024  &         3.843  &         5.692  &         8.259  &      10.205  	&      11.837  &      19.262  &      27.313  \\
    -60\% &         2.236  &         2.611  &         3.371  &         4.300  &         6.444  &         9.436  &      11.699  	&      13.593  &      22.195  &      31.505  \\
    -70\% &         2.582  &         2.762  &         3.860  &         4.922  &         7.449  &       11.047  &      13.768  	&      16.040  &      26.327  &      37.434  \\
    -80\% &         3.162  &         2.887  &         4.663  &         5.909  &         8.961  &       13.505  &      16.983  	&      19.884  &      32.960  &      47.018  \\
    -90\% &         4.472  &         2.970  &         6.472  &         8.095  &      12.064  &         18.289  &      23.325  	&      27.613  &      46.986  &      67.650  \\
    \bottomrule
    \end{tabular}
  \label{tab:compoudingSharpe}
\end{table}%

Compared to the standard square root rule, we can compute the ratio of the real factor and $\sqrt d $ given below

\begin{table}[H]
  \centering
  \caption{This table provides the ratio of the correct compounding effect and the usual square root constant. More precisely, we compute $ { 1 + \frac{ 2 \rho }{ 1-\rho} \left( 1 - \frac{ 1 - \rho^q}{q (1-\rho)} \right) }$ }
    \begin{tabular}{|r|rrrrrrrrrr|}
    \toprule
    $\rho$  \textbackslash \, $q $   & 2     & 3     & 4     & 6     & 12    & 24    & 36    & 48    & 125   & 250 \\
    \midrule
    \midrule
    90\%  &         1.378  &         1.655  &         1.877  &         2.223  &         2.870  &         3.478  &         3.757  &         3.908  &         4.190  &         4.276  \\
    80\%  &         1.342  &         1.579  &         1.760  &         2.020  &         2.428  &         2.709  &         2.809  &         2.858  &         2.946  &         2.973  \\
    70\%  &         1.304  &         1.503  &         1.647  &         1.838  &         2.095  &         2.240  &         2.288  &         2.311  &         2.354  &         2.367  \\
    60\%  &         1.265  &         1.428  &         1.539  &         1.676  &         1.837  &         1.920  &         1.947  &         1.961  &         1.985  &         1.992  \\
    50\%  &         1.225  &         1.354  &         1.436  &         1.531  &         1.633  &         1.683  &         1.700  &         1.708  &         1.723  &         1.727  \\
    40\%  &         1.183  &         1.281  &         1.339  &         1.402  &         1.466  &         1.497  &         1.507  &         1.512  &         1.522  &         1.525  \\
    30\%  &         1.140  &         1.208  &         1.246  &         1.286  &         1.325  &         1.344  &         1.350  &         1.353  &         1.359  &         1.361  \\
    20\%  &         1.095  &         1.137  &         1.159  &         1.181  &         1.203  &         1.214  &         1.218  &         1.219  &         1.223  &         1.224  \\
    10\%  &         1.049  &         1.068  &         1.077  &         1.087  &         1.096  &         1.101  &         1.102  &         1.103  &         1.105  &         1.105  \\
    0\%   &         1.000  &         1.000  &         1.000  &         1.000  &         1.000  &         1.000  &         1.000  &         1.000  &         1.000  &         1.000  \\
    -10\% &         0.949  &         0.935  &         0.927  &         0.920  &         0.912  &         0.908  &         0.907  &         0.906  &         0.905  &         0.905  \\
    -20\% &         0.894  &         0.872  &         0.858  &         0.844  &         0.831  &         0.824  &         0.821  &         0.820  &         0.818  &         0.817  \\
    -30\% &         0.837  &         0.812  &         0.792  &         0.773  &         0.754  &         0.744  &         0.740  &         0.739  &         0.736  &         0.735  \\
    -40\% &         0.775  &         0.757  &         0.727  &         0.704  &         0.680  &         0.668  &         0.663  &         0.661  &         0.657  &         0.656  \\
    -50\% &         0.707  &         0.707  &         0.661  &         0.637  &         0.609  &         0.593  &         0.588  &         0.585  &         0.580  &         0.579  \\
    -60\% &         0.632  &         0.663  &         0.593  &         0.570  &         0.538  &         0.519  &         0.513  &         0.510  &         0.504  &         0.502  \\
    -70\% &         0.548  &         0.627  &         0.518  &         0.498  &         0.465  &         0.443  &         0.436  &         0.432  &         0.425  &         0.422  \\
    -80\% &         0.447  &         0.600  &         0.429  &         0.415  &         0.387  &         0.363  &         0.353  &         0.348  &         0.339  &         0.336  \\
    -90\% &         0.316  &         0.583  &         0.309  &         0.303  &         0.287  &         0.268  &         0.257  &         0.251  &         0.238  &         0.234  \\
    \bottomrule
    \end{tabular}%
  \label{tab:addlabel}%
\end{table}%

\clearpage

% Bibliography.

\bibliographystyle{jfe}
\bibliography{mybib}

% Figures and tables, showing how to structure captions
\clearpage

\end{document}